\newtheorem{definition}{Definition}
\newtheorem{lemma}{Lemma}
\journal{Comptes Rendus Physique}
\begin{document}

\begin{frontmatter}

\title{Evolution of Universes in Causal Set Cosmology}


\author{Fay Dowker}

\author{Stav Zalel}

\address{Blackett Laboratory, Imperial College, Prince Consort Road, London SW7 2AZ, UK.}

\begin{abstract}
The causal set approach to the problem of quantum gravity is based on the hypothesis that spacetime is fundamentally discrete. Spacetime discreteness opens the door to novel types of dynamical law for cosmology and the Classical Sequential Growth (CSG) models of Rideout and Sorkin form an interesting class of such laws. It has been shown that a  renormalisation of the dynamical parameters of a CSG model occurs whenever the universe undergoes a Big Crunch-Big Bang bounce. In this paper we propose a way to model the creation of a new universe after the singularity of a black hole. We show that renormalisation of dynamical parameters occurs in a CSG model after such a creation event. We speculate that this could realise aspects of Smolin's Cosmological Natural Selection proposal. 
\end{abstract}

\begin{keyword}
quantum gravity, causal sets, cosmology 
\end{keyword}

\end{frontmatter}


{\itshape ``This world and yonder world are forever giving birth:  every cause a mother, its effect a child. When the effect is born, it too becomes a cause and gives birth to wondrous effects. These causes are generation on generation, but it needs a very well lighted eye to see the links in the chain.''} {Rumi} \cite{Rumi}
 \vskip .3cm
\section{Introduction}

The causal set approach to the problem of 
quantum gravity  \cite{Bombelli:1987aa,Sorkin:1990bh,Sorkin:1990bj}
proposes that, of all the major concepts in our current 
best theories, the spacetime causal order 
from General Relativity and the path integral from quantum theory 
will survive the coming revolution.  
The main \textit{new} hypothesis in causal set theory is fundamental spacetime discreteness at the Planck scale. 

\subsection{The marriage of causal order and discreteness}

Spacetime causal order is a basic concept in General Relativity. 
The physics of General Relativity cannot be understood without reference to causal order, witness for example the widespread use of Penrose diagrams and the definition of a black hole. There are further good reasons for postulating that causal order is more fundamental than the other attributes of a spacetime,
 topology, differentiable structure and metric. Classic results in global causal analysis 
show that the causal order of a strongly causal  spacetime
determines its local null geodesics, its chronological structure and its topology \cite{Penrose:1972}, its differentiable structure \cite{Hawking:1976fe}
and its metric up to a
conformal factor \cite{HawkingEllis}. 
Causal order is therefore a unifying concept, containing $\frac{9}{10}$ of the full spacetime geometry (in $4$ dimensions) and lacking only information about local physical scale.

The other partner in the causal set marriage is spacetime discreteness which is perhaps the simplest way that the widely expected, Planck scale breakdown of the differentiable manifold description of spacetime can be made manifest. In the causal set approach to quantum gravity, discreteness
of spacetime is fundamental: the histories in the path integral 
 for full quantum gravity are discrete and the scale of the discreteness
 is close to the Planck scale. No continuum limit is 
 taken in the full theory and physics on large scales is a continuum \textit{approximation} to the 
 underlying theory. A continuum limit would lack some of the physics of the full theory \cite{Rideout:2000fh}.

Marrying causal order and discreteness straightforwardly results in a \textit{discrete order}, 
a discrete manifold 
whose only structure is a partial order relation $\prec$ on its elements. 
In the causal set approach to quantum gravity, the inner basis of spacetime
is hypothesised to be a discrete order or causal set \cite{Myrheim:1978,tHooft:1979,Bombelli:1987aa}. For further reviews of Causal Set theory see \cite{intro, discrete}. 






 The way a causal set, $(C, \prec)$ gives rise to an approximating continuum 
spacetime  is that the order relation $\prec$ underpins the causal order
of $(M,g)$ and the physical scale missing from the causal order is furnished by the atomicity: the \textit{number} of elements 
in any portion of the causal set manifests itself, on average, as 
spacetime volume of the corresponding region of the approximating continuum. Number plus Order equals Geometry, in R. Sorkin's slogan. 
  
\subsection{A successful prediction from causal set cosmology}\label{dynamicsof}
 
The potential of causal set theory to make progress on cosmological 
questions has been demonstrated by the successful heuristic prediction of the order of magnitude of a fluctuating cosmological ``constant'', $\Lambda$, by Sorkin \cite{Sorkin:1990bj,Sorkin:1997gi}. To date this is the only successful prediction-in-advance from any quantum gravity theory. 
The argument starts by assuming that some 
as yet unknown dynamics drives $\Lambda$ towards
zero and any observed nonzero value of $\Lambda$ is a fluctuation about that mean value.  Fluctuations in 
$\Lambda$ are subject to an uncertainty relation with its canonically conjugate variable, the 
spacetime volume $V$. In the path integral over causal sets one does 
not sum over the cardinality of the causal sets, just as one doesn't sum over time in the path integral in quantum mechanics. The number of elements $N$ is held fixed and due to the statistical nature of the correspondence between number and volume, $V$ will be uncertain in value by $\sqrt{N} \sim \sqrt{V}$ where $V$ is measured in fundamental Planckian units. So
$\Lambda \sim \Delta \Lambda \sim (\Delta V)^{-1}
\sim \pm \frac{1}{\sqrt{V}}\sim\pm 10^{-120} $ in natural units,  where we 
have used the volume of the observable universe for $V$. 

Computer simulations of a phenomenological model of the fluctuations of $\Lambda$ confirmed
the earlier prediction \cite{Ahmed:2002mj}. Certain ad hoc aspects of this
phenomenological model have been alleviated in \cite{Ahmed:2012ci} 
but one serious one remains: $\Lambda$ is forced to be homogeneous.
It has been argued that if $\Lambda$ were allowed to fluctuate 
in space this would conflict with $\frac{\delta T}{T} \sim 10^{-5}$ constraints from
the CMB \cite{Barrow:2006vy,Zuntz:2008zza}. This conclusion, however, assumes that General Relativity holds and the fluctuations couple locally to the metric. The original homogeneous fluctuating $\Lambda$ model is highly nonlocal 
in time, however, and it is possible that a nonlocal model of fluctuating inhomogeneities in $\Lambda$ can be compatible with CMB constraints. If this turns out to be the 
case then the fluctuations in $\Lambda$ might themselves be able to act 
as the seeds for structure formation in the early universe.

There is currently a window of opportunity for causal set $\Lambda$ with increasing tension between high and low redshift measurements of the Hubble parameter and the standard constant-$\Lambda$CDM cosmological model. In particular, two measurements of the Hubble parameter at redshift of around $z = 2.34$ by the Baryon Oscillation Spectroscopic Survey (BOSS) which uses the Baryon Acoustic Peak as a standard ruler, agree with each other and give a best fit value of $\rho_\lambda(z = 2.34)$ which is negative
 \cite{Font-Ribera:2013wce,Delubac:2014aqe}. The significance of these two
 results is not too high by themselves but nevertheless this is exciting 
 because the causal set model of fluctuating $\Lambda$ implies that 
 it will have been negative in the past. Further observations of model independent values of $H(z)$ will be of great importance in deciding if the two BOSS results are the heralds of the end of a constant Cosmological Constant.   

\subsection{Notation and terminology used in this work}

\par 
We use the term \textbf{causet} as shorthand for causal set. 
Consider a causet $(C,\prec)$ with elements $x$ , $y$. If $x \prec y$ we say ``$x$ precedes $y$'' or ``$x$ is below $y$'' or ``$y$ is above $x$'' or ``$x$ is an ancestor of $y$'' or  ``$y$ is a descendant of $x$.'' If $x\ne y$ are unrelated by $\prec$ we write $x\natural y$ and say ``$x$ and $y$ are unrelated''. The order is irreflexive: $x\nprec x$. 

\par The \textbf{past} of a subset $A$ of $C$ 
is the subset $P(A):=\left\{y\in C\,|\,
\exists x\in A\ s.t.\  y\prec x\right\}$. The \textbf{future} of $A$ is the subset $F(A):=\left\{y\in C\,|\, \exists x\in A\ s.t.\  y\succ x\right\}$. 
 For a single element, with a slight abuse of notation, we write $P(x)$ ($F(x)$) as the past (future) of $\{x\}$.

\par An element $x\in C$ is a \textbf{maximal (minimal) element} of $C$ if $\not\exists$ $y\in C$ such that $y\succ x$ ($y\prec x$). 

\par A \textbf{link} is a relation $x \prec y$ such that $\not\exists$ $z$ such that $x \prec z \prec y$.  If $x \prec y$ is a link then $x$ is an \textbf{immediate ancestor} of $y$ and $y$ is a \textbf{direct descendant} of $x$. 


A \textbf{future set} is a subset which contains its own future. A \textbf{past set} is a subset which contains its own past.
A \textbf{partial stem} is a past set of finite cardinality.

\par A \textbf{post} is an element which is causally related to every other element in the set.  
\par A finite causet can be represented as a \textbf{Hasse diagram} in which each element is a dot or node and if $x\prec y$ is a link then an upward-going 
edge is drawn from $x$ to $y$. 

\section{New universes in causal set theory} \label{physkin}

Smolin's idea of Cosmological Natural Selection \cite{universeevolve} uses the 
conjecture that a new universe\footnote{This use of the word `universe' is a misnomer since we must contemplate the existence of multiple
universes. Nevertheless, it is widespread and so we will use it.} can come into being at the singularity of a black hole. This natural and appealing idea has also been proposed by S. Hawking and others as
``baby universes.'' Smolin also proposes that the coupling constants of physical theory are different in the child universe than in the parent. 
 These conjectures will eventually have to be proved within a full theory of quantum gravity but 
for now, we can ask: how could such an event in principle be described within a particular approach? In causal set theory, we propose the following. 

\begin{figure}[h]
  \centering
	\includegraphics[width=0.8\textwidth]{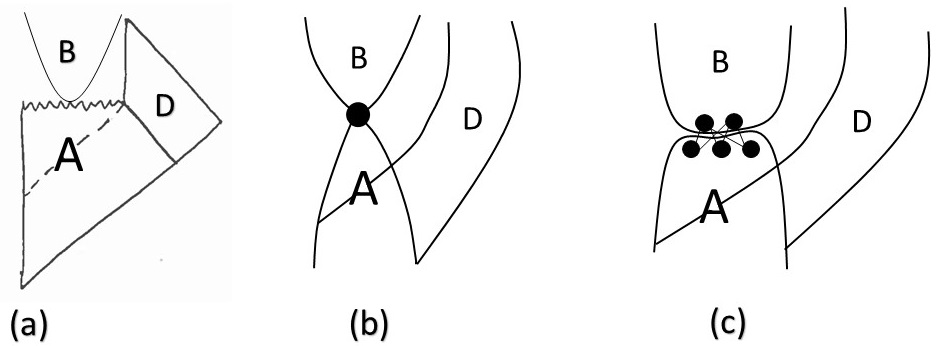}
	\caption{A black hole gives rise to a new universe. The new universe is $B$, the past of the singularity is $A$, and
	$D$ is the region unrelated to $B$  (a) A ``continuum'' Penrose diagram (b) A sketch of a causet in which the singularity is a single element (c) A sketch of a causet  in which the singularity is a partition in which all minimal elements of region $B$ are above all maximal elements of region $A$}
	\label{bh_diagrams}
\end{figure}

Consider the Penrose diagram for the formation and evaporation of a black 
hole in figure \ref{bh_diagrams}(a) with the  new
universe $B$ sketched heuristically above the singularity. We propose two 
ways to model this as a causet. Figure \ref{bh_diagrams}(b) represents a causet in which the singularity\footnote{The word ``singularity''  is being used in the sense that it is an event where General Relativity 
breaks down and that cannot be described in terms of a classical Lorentzian geometry. But, the idea is, it \textit{can} be described  in the underlying discrete 
theory as part of a causal set and so it is not, strictly, ``singular'' in the full theory at all.} is modelled as a single element of the set called a \textit{partial post} \cite{glossary}. In the causet sketched in figure \ref{bh_diagrams}(c), the singularity is not an element of the causet but 
a partition of the causet we call a \textit{partial break}. 

Let us formalise these concepts. 
\begin{definition}
A partial post is an element $y$ of causet $C$ such that,  if $y\prec x$ and $y\natural z$, then $x\natural z$.
\end{definition}
This is illustrated in figure \ref{bh_diagrams}(b) in which $A = P(y)$, $B = F(y)$ and $D = C \setminus (A\cup B\cup\{y\})$. 

\begin{definition}
A partial break in a causet $C$ is an ordered pair of nonempty subsets $(A,B)$ of C that satisfies
\begin{itemize}
\item  $a\in A$, $b\in B$ $\implies a\prec b$;
\item  $B$ is a future set;
\item $A \cup B$ is a past set.
\end{itemize}
\end{definition}
We call $A$ the past of the partial break and $B$ the future of the partial break. 
This is illustrated in figure \ref{bh_diagrams}(c) where 
$D = C \setminus (A\cup B)$. 

Special cases of these concepts can be defined when the region $D$ shown in figures  \ref{bh_diagrams}(b) and \ref{bh_diagrams}(c) is empty:
\begin{definition}
A post is a partial post, $y$, such that the whole causet $C = P(y)\cup F(y)\cup \{y\}$.
\end{definition}
\noindent Note, this is an equivalent to the definition of a post, given previously, as an element that 
is related to all other elements in $C$. A post models the collapse to a single 
element, and subsequent re-expansion of the whole 
spacetime. 
\begin{definition}
A break is a partial break $(A,B)$ such that the whole causet $C = A\cup B$.
\end{definition}
\noindent A break is not collapse to a single point but it 
is something that does not have a continuum interpretation: a causet with a break could not be 
embedded in a Lorentzian spacetime such that the causet order and spacetime order of the 
embedded elements agree. A break is therefore eligible to be considered
as a ``singularity'' from the continuum point of view. 

Now, there is a partial break above and below a partial post:
\begin{lemma}\label{postimpliesbreaks}
If $y$ is a partial post in $C$, then the following pairs of subsets are both partial breaks: $(P(y), F(y)\cup\{y\})$ and $(P(y)\cup\{y\}, F(y))$.
\end{lemma}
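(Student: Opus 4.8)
The plan is to verify, for each of the two candidate pairs, the three clauses in the definition of a partial break: that every element of the left-hand set precedes every element of the right-hand set, that the right-hand set is a future set, and that the union of the two sets is a past set. Writing $A_1=P(y)$, $B_1=F(y)\cup\{y\}$, $A_2=P(y)\cup\{y\}$, $B_2=F(y)$, I would first observe that $A_1\cup B_1 = A_2\cup B_2 = P(y)\cup\{y\}\cup F(y)$, so the third clause is literally the same assertion in both cases and has to be proved only once.

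Next I would dispose of the two easy clauses, neither of which uses the partial-post hypothesis. For the precedence clause in the first pair, $a\in P(y)$ gives $a\prec y$ while $b\in F(y)\cup\{y\}$ gives $y\prec b$ or $b=y$, so $a\prec b$ by transitivity; in the second pair, $a\in P(y)\cup\{y\}$ gives $a\prec y$ or $a=y$ while $b\in F(y)$ gives $y\prec b$, so again $a\prec b$. For the future-set clause: if $b\in F(y)\cup\{y\}$ then $y\prec b$ or $y=b$, so any $c$ with $b\prec c$ satisfies $y\prec c$, i.e.\ $c\in F(y)$; the same one-line computation shows $F(y)$ itself is a future set, which handles $B_2$.

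The substantive step is showing that $S:=P(y)\cup\{y\}\cup F(y)$ is a past set, and this is exactly where the partial-post property is used. Take $w\in S$ and $v\prec w$; I must show $v\in S$, equivalently that $v$ is comparable to $y$ or equal to it. If $w\in P(y)\cup\{y\}$ this is immediate from transitivity ($v\prec w\preceq y$ gives $v\prec y$). The only real case is $w\in F(y)$: suppose for contradiction that $v\natural y$. Then $y\prec w$ and $y\natural v$, so the defining property of the partial post $y$ (applied with $x=w$ and $z=v$) yields $w\natural v$, contradicting $v\prec w$. Hence $v$ is related to $y$, so $v\in S$ and $S$ is a past set.

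I do not expect a genuine obstacle; the only thing to watch is the nonemptiness requirement in the definition of a partial break. The sets $B_1$ and $A_2$ contain $y$ and so are automatically nonempty, but $A_1=P(y)$ and $B_2=F(y)$ are nonempty only when $y$ is neither a minimal nor a maximal element of $C$. So the lemma as it is used tacitly assumes --- consistently with the black-hole picture of figure \ref{bh_diagrams}(b), in which both $A$ and $B$ are nonvacuous --- that a partial post has both a nonempty past and a nonempty future; I would either fold this into the hypothesis or flag the degenerate cases explicitly.
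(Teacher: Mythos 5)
Your proof is correct and follows essentially the same route as the paper's: the first two clauses are dispatched as routine, and the substantive step is showing that $P(y)\cup\{y\}\cup F(y)$ is a past set, with the case of an ancestor of an element of $F(y)$ handled via the partial-post property exactly as in the paper. Your observation about nonemptiness (the definition of a partial break requires both subsets nonempty, so the lemma tacitly assumes $P(y)$ and $F(y)$ are nonvacuous, i.e.\ $y$ is neither minimal nor maximal) is a fair caveat that the paper's proof passes over silently.
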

\begin{proof} Let $y$ be a partial post. Consider the pair $(P(y), F(y)\cup\{y\})$. The first two conditions  for a partial break are clearly satisfied. Consider $z$, the ancestor of an element of $P(y) \cup F(y)\cup\{y\}$. If $z$ is an ancestor of $y$ itself then $z\in P(y)$. If $z$ is an ancestor of an element of $P(y)$ then it is an ancestor of $y$ and in $P(y)$. If $z$ is an ancestor of an element of $F(y)$ then $z$ cannot be 
unrelated to $y$. So $z \in P(y) \cup F(y)\cup\{y\}$. So $P(y) \cup F(y)\cup\{y\}$ is a past set. 

Similarly for $(P(y)\cup\{y\}, F(y))$.
\end{proof}

Our intuition points towards a partial post as being the appropriate representation of 
a singularity caused by gravitational collapse: a single spacetime atom corresponding to one Planck 
unit of spacetime volume, what could be more ``collapsed'' than that? However, one could also
make a case for a break to be even more singular: a break is, in some sense,
the nothingness between the end of a portion of an old universe 
and the beginning of the new. Of course, we do not know enough about quantum gravity to make a
final judgment, but in any case, we don't need to decide between them for the purpose of the current paper.
Since, by Lemma \ref{postimpliesbreaks},  the existence of a partial post is a stronger condition than 
the existence of a partial break and we can focus on the partial break as our
model of the birth of a new universe. Whatever follows from the existence of 
a partial break will also follow from the existence of a partial post. 

So the question then is, if a partial break occurs in a causet 
that describes our spacetime, what consequences does that have? To 
answer this question we need to embed it in a dynamical model for 
causets. 

\section{Classical Sequential Growth}

The Classical Sequential Growth models of Rideout and Sorkin \cite{CSG} are  stochastic processes in which a past-finite causet 
grows by the continual birth of new elements which respect the discrete 
analogue of general covariance and a causality condition ensuring that the growth of one part of the causet does not depend on the structure 
of the causet spacelike to it. 

For a given CSG model, the growth process consists of countably many stages and the causet is finite at each stage. At each stage a single new element is added. Suppose, causet $C_n$, with cardinality $n$, has already grown.  The transition from this parent causet, $C_n$, to child causet, $C_{n+1}$, with one more element is called \textbf{stage $n$}. The new element chooses, with a certain probability, a partial stem in $C_n$ as its past; this partial stem is the \textbf{precursor} of the transition. A \textbf{spectator} of the transition is an element which is not an element of the precursor; it is unrelated to the new element.  
\begin{figure}[h]
  \centering
	\includegraphics[width=0.5\textwidth]{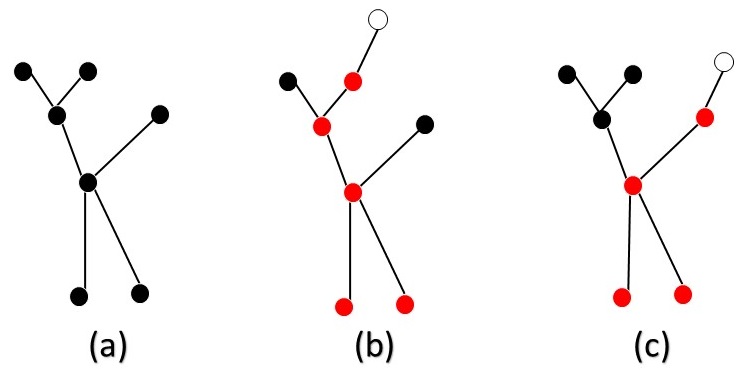}
	\caption{(a) The parent causet, $P$, of cardinality 7; (b) possible child $Q$; (c) possible child $R$. In each of 
	(b) and (c) the new element is the empty circle, the precursor set of each transition is red and the spectator elements of the transition are black.}
	\label{CCC}
\end{figure}

An example of stage $7$ is shown in figure \ref{CCC}. There are two possible transitions shown. In each, (a) is the parent causet, and (b) and (c) are two possible children.

A CSG model is specified by a countable set of non-negative constants,
\begin{equation*}
 t_0, t_1, t_2, t_3, t_4\dots,
 \end{equation*}
where $t_k$ is the relative probability that the newly born element chooses a particular set of ancestors which has cardinality $k$. The probability of the transition from $C_n$ to $C_{n+1}$ is \cite{CSG}
\begin{equation}
\label{probability_of_transition}
Prob(C_n \to C_{n+1}) = \frac{\lambda \left(\varpi,m\right)}{\lambda \left(n,0\right)}
\end{equation}
where
\begin{equation}
\lambda \left(\varpi,m\right)=\sum_{k=m}^{\varpi} \binom{\varpi-m}{k-m} t_k
\label{relative_prob_of_trans_PPXXYY}
\end{equation}
and $\varpi$ is the cardinality of the precursor and $m$ is the number of maximal elements of the precursor. 
The probability of growing, by stage $n$, the particular causal set $C_n$ is given by the product of the probabilities for the $n$ transitions that produce $C_n$. 
From the form of the transition probability we see that the space of CSG models is a projective space: the set of parameters $(t_k)$ 
and the set $(\alpha t_k)$, where $\alpha>0$ is a constant, define the same CSG model. 

Note that the denominator of (\ref{probability_of_transition}) is
the same for all the possible transitions at stage $n$. The numerator, $\lambda \left(\varpi,m\right)$, is therefore the \textit{relative probability} of the transition $C_n \to C_{n+1}$ -- relative to the other transitions that are possible from $C_n$. For the proof that the sum of the relative probabilities over all possible transitions from $C_n$ equals the denominator, $\lambda \left(n,0\right)$, see \cite{CSG}. 

This description of a CSG model treats the births as if they happen in a total order and each element of the resulting causet is labelled by the stage at which it is born. Part of this information -- the order of birth of any two elements that are spacelike to each other in the resulting causal set --  is pure gauge and unphysical. Indeed, in any particular CSG model, two order-isomorphic causal sets with different labellings have equal probability of growing:  a condition called ``discrete general covariance.'' Each CSG model also satisfies a causality condition called ``Bell causality'' which means the growth of the causet is not influenced by its structure spacelike to the part of the causet that is growing.  For example, consider a parent causet $P$ and two possible children $P\rightarrow Q$ and $P\rightarrow R$ as shown in figure \ref{CCC}. The element at the top left corner of $P$ is a spectator of \textit{both} transitions. We can delete that spectator from $P$, $Q$ and $R$ to form causets $\bar{P}$, $\bar{Q}$ and $\bar{R}$ as shown in figure \ref{BBB}. 
The Bell Causality condition is:
\begin{equation}
\frac{Prob(P\to Q)}{Prob(P\to R)}=\frac{Prob(\bar{P}\to \bar{Q})}{Prob(\bar{P}\to \bar{R})}\,.
\end{equation}
The presence, or absence, of any spectator has no effect on the relative probabilities of transitions. This will be important later.

\begin{figure}[h]
  \centering
	\includegraphics[width=0.5\textwidth]{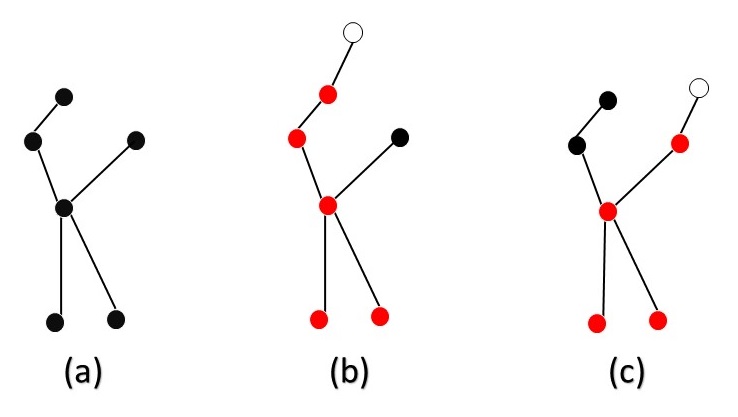}
	\caption{The transition in figure \ref{CCC} without the spectator element (a) Parent causet $\bar{P}$; (b) child causet $\bar{Q}$; (c) child causet ${\bar{R}}$}
	\label{BBB}
\end{figure}

\subsection{Cosmic Renormalisation}
Sorkin showed that, in a CSG model $(t_k)$, if the causal set that grows contains a post -- a Big Crunch-Big Bang event -- the effective dynamics of the causal set after the post is governed by a different CSG model with a renormalised set of constants $\{t'_k\}$ \cite{Sorkin:1998hi}. 

The crucial point is actually that there is a \textit{break} in the causal set and we will redo the proof here in the more general case of a break but the essential idea is the same.  

\begin{lemma} In a CSG model, the growth of the future of a break is governed by a CSG dynamics with renormalised parameters. 
\end{lemma}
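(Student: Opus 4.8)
The plan is to show that, conditional on the growing causet containing a break $(A,B)$, the chain of causets through which $B$ passes as it grows is itself generated by a CSG process, and then to read off the renormalised constants. First I would fix the kinematics forced by the break: since an element born at a later stage can never be an ancestor of one born earlier, the requirement that $a\prec b$ for all $a\in A$, $b\in B$ forces every element of $A$ to be born before every element of $B$; hence $A=C_N$ with $N:=|A|$, which is finite because the break appears at a finite stage, and the elements of $B$ are born at stages $N+1,N+2,\dots$ Write $B_n:=B\cap C_n$ for the part of $B$ grown by stage $n$. Any element of $B$ born at stage $n>N$ lies above all of $A$, so its precursor is $A\cup S$ with $S$ a partial stem of $B_{n-1}$; conversely, precisely the transitions of this form keep the growth inside the break, since a new element whose precursor omits some element of $A$ could lie neither in $A$ (already fully grown) nor in $B$ (it would fail to be above all of $A$). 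This sets up a bijection between the intrinsic growth transitions of the causet $B$ and the break-compatible transitions of $C$ after stage $N$.

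The crucial structural observation is then the following: the relative probability of the transition with precursor $A\cup S$ is $\lambda(|A|+|S|,m)$, where $m$ is the number of maximal elements of $A\cup S$; and because the break condition puts every element of $A$ below every element of $S$, the maximal elements of $A\cup S$ coincide with those of $S$ whenever $S\neq\emptyset$, while for $S=\emptyset$ the precursor is $A$ itself, with $m=\mu$, $\mu$ being the number of maximal elements of $A$. So this relative probability depends only on $|S|$ and on the number of maximal elements of $S$ -- data intrinsic to the transition as it appears inside $B$ -- together with the two fixed integers $N$ and $\mu$. Since the relative probability of a transition is insensitive to spectators (Bell causality) and, in a CSG model, is a function only of the cardinality $\varpi$ and the number of maximal elements $m$ of its precursor, this is already exactly the functional form of a CSG relative probability; and Bell causality for the induced $B$-dynamics is inherited from that of the ambient $C$-dynamics, because a spectator of a $B$-transition is a spectator of the corresponding $C$-transition.

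It then remains only to identify the renormalised constants. Matching $\sum_{k=m}^{\varpi}\binom{\varpi-m}{k-m}t'_k=\lambda(N+\varpi,m)$ for all $\varpi\geq m\geq 1$, a short induction on $\varpi$ using Pascal's rule (equivalently Vandermonde's convolution) gives $t'_k=\sum_{j=0}^{N}\binom{N}{j}t_{k+j}$ for $k\geq 1$, while the $S=\emptyset$ transition fixes the remaining constant as $t'_0=\lambda(N,\mu)=\sum_{k=\mu}^{N}\binom{N-\mu}{k-\mu}t_k$. All $t'_k$ are non-negative, so $(t'_k)$ is a genuine CSG model; one then checks that summing the relative probabilities of all break-compatible transitions from $C_n$ reproduces the normalisation of the CSG-$(t'_k)$ dynamics, which follows from the Markov sum rule for $(t'_k)$ together with the identities just derived. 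Hence the conditional transition probabilities of $B$'s growth coincide with those of the CSG-$(t'_k)$ dynamics.

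The step I expect to be the main obstacle is making the conditioning rigorous: ``a break occurs'' is an event about the entire infinite future of the growth, so one must verify that conditioning on it merely deletes the forbidden transitions (those whose new element is not above all of $A$) and renormalises, without distorting the ratios of the survivors. This works because the probability that the next element lies above all of $A$, given $C_n=A\sqcup B_n$, turns out to depend on $B_n$ only through $|B_n|$ (once more a short computation with binomial coefficients), so the relevant conditional probabilities telescope and every trace of the structure of $B_n$ cancels; the same computation shows this conditioning has positive probability exactly when $\lambda(N,\mu)>0$. A secondary point worth flagging is that, unlike $t'_k$ for $k\geq 1$, the constant $t'_0$ is not given by the same binomial-transform formula: it carries the extra datum $\mu$, which is masked in the special case of a post, where the past of the relevant break has a single maximal element, and so is easy to overlook.
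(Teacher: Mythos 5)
Your core argument coincides with the paper's: you condition on the break, observe that the allowed transitions are exactly those whose precursor contains all of $A$, note that for a nonempty current-era stem $S$ the maximal elements of $A\cup S$ are those of $S$, and convert $\lambda(N+\varpi,m)$ into CSG form via the Vandermonde convolution, obtaining $\tilde{t}_k=\sum_{l=0}^{N}\binom{N}{l}t_{k+l}$ for $k\ge 1$ and $\tilde{t}_0=\lambda(N,\mu)$; the paper does exactly this (with $r$ in place of your $\mu$), merely running the binomial identity forwards instead of solving your matching equations. Your additional material on making the conditioning rigorous goes beyond the paper, which conditions only formally at the level of relative probabilities, and your telescoping observation --- that the probability that all subsequent births lie above $A$ depends on the grown causet only through $|B_n|$ --- is correct and is the right way to see that conditioning merely deletes forbidden transitions without distorting the ratios of the survivors, whenever the break event has positive probability.

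One side-claim in that extra material is, however, false: that ``this conditioning has positive probability exactly when $\lambda(N,\mu)>0$.'' Positivity of $\lambda(N,\mu)$ only makes each one-step conditional probability positive; the break event is an infinite intersection, with probability $\prod_{j\ge 0}\tilde{\lambda}(j,0)/\lambda(N+j,0)$, and this product can vanish even when every factor is positive. For instance, take $t_0=t_1=1$ and $t_k=0$ for $k\ge 2$, and let $A$ have $N\ge 1$ elements with a single maximal element: then $\lambda(N,1)=t_1=1>0$, but $\tilde{t}_0=\tilde{t}_1=1$ and $\tilde{t}_k=0$ for $k\ge 2$, so the product is $\prod_{j\ge 0}\frac{1+j}{1+N+j}=0$. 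This does not damage the lemma as stated, nor your derivation of the renormalised constants (the paper itself never addresses positivity), but to make the conditional dynamics literally well defined you must either assume the break event has positive probability or interpret the conditioning as the paper does, namely as a statement about the relative probabilities of the allowed transitions.
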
 
\begin{proof} Suppose that at stage $N$, a causal set $A$ with cardinality $N$ and with $r$ maximal elements has grown. Let there be a break between $A$ and the rest of the causal set.  Following Sorkin we refer to the causal set after the break as the \textit{current era} 
and consider it as a new universe. The transition probabilities for the growth of the current era are given by the transition probabilities in the original CSG model \textit{conditioned on} the existence of a break after $A$. 

The condition that there is a break after $A$ means that all elements born in the current era must have a precursor set that includes the whole of $A$. This restricts the allowed transitions. The most convenient quantities to work with are therefore the relative probabilities of the allowed transitions  and it is understood that the normalisation factor will always be the sum of the relative probabilities over the allowed transitions at each stage. 

Consider element $y$ born in the current era at stage $N+n_c$ in the full dynamics, \textit{i.e.} at effective stage $n_c$ in the current era. Let
$y$ have $\varpi$ ancestors  and $m$ immediate ancestors \textit{in the current era}.  

If $y$ is a minimal element of the current era then $\varpi=m=0$, and $y$ has $N$ ancestors and $r$ immediate ancestors in the whole causal set including $A$. The relative probability of the transition in which $y$ is born equals
\begin{equation}
\label{originary}
\lambda \left(N,r\right)\,.
\end{equation}

If $y$ is a non-minimal element of the current era then $n_c \ge 1$ and $y$ has $\varpi +N$ ancestors and $m$ immediate ancestors, with $\varpi\not= 0$, $m\not= 0$.
 The relative probability of the transition in which $y$ is born is equal to $\lambda(\varpi+N,m)$.
 
 We have 
 \begin{equation}
\begin{split}
\label{proof_of_relation_of_t_and_t_tilde}
\lambda \left(\varpi+N,m\right)
&=\sum_{s=m}^{\varpi+N} \binom{\varpi+N-m}{s-m} t_s\\
&=\sum_{s=m}^{\varpi+N} \sum_{k+l=s} \binom{\varpi-m}{k-m} \binom{N}{l} t_s\\
&=\sum_{k+l=m}^{\varpi+N} \binom{\varpi-m}{k-m} \binom{N}{l} t_{k+l}\\
&=\sum_{k=m}^{\varpi} \sum_{l=0}^{N} \binom{\varpi-m}{k-m} \binom{N}{l} t_{k+l}\\
&=\sum_{k=m}^{\varpi} \binom{\varpi-m}{k-m} \sum_{l=0}^{N} \binom{N}{l} t_{k+l}\\
&=\sum_{k=m}^{\varpi} \binom{\varpi-m}{k-m} \tilde{t}_k
\end{split}
\end{equation}
where the identity 
 \begin{equation} \sum_{i+j=k} \binom{m}{i}\binom{N}{j}=\binom{m+N}{k} 
 \end{equation}
was used and we have defined
\begin{equation}
\tilde{t}_k:= \sum_{l=0}^{N} \binom{N}{l} t_{k+l} \hspace{3mm} \forall \hspace{3mm} k\ge 1\,.
\end{equation}
Defining $\tilde{t}_0 : = \lambda \left(N,r\right)$, and 
$\tilde{\lambda}(\varpi,m):=\lambda(\varpi+N,m)$,
we see that the relative probabilities of the allowed transitions in the current era have the form of a CSG model with new, ``renormalised'' parameters $(\tilde{t}_k)$. 
\end{proof}

\section{Branching universes}

Our purpose is to model the branching off of a child universe as a partial break. Let us consider, then, a CSG model $(t_k)$  in which a causet, $C$, grows which has a partial break. We will see that we have already done all the work needed to determine the effective dynamics of the child universe. 

If a causet grows in which there is a partial break, then, at some stage of the process the past of the partial break -- a set $A$ of cardinality $N$, say --  will have grown. 
The growth of the child universe, \textit{i.e.} the future of the partial break, is conditioned on the existence of the break of which $A$ is the past. 
Let $D$ be the set of elements that are spacelike to the child universe
as illustrated in figure \ref{bh_diagrams}. 
Consider a transition in which an element of the child universe is born. Any already existing element of $D$ will be a spectator to this transition. 
This means that, by the Bell causality condition, 
 the relative probabilities of the transitions which grow the new universe are 
the same as those with the elements of $D$ removed. 

 In other words the growth of the child universe to the future of the partial break is governed by the \textit{same} stochastic law as for the growth of the future of a break with past $A$. So, 
the growth of the child universe is governed by a CSG model in which the parameters, $(\tilde{t}_k)$ are given as before by 
\begin{align}
\tilde{t}_k&:= \sum_{l=0}^{N} \binom{N}{l} t_{k+l} \hspace{3mm} \forall \hspace{3mm} k\ge 1\\
\tilde{t}_0& : = \lambda \left(N,r\right)
\end{align}
where $N$ is the cardinality of $A$ and $r$ is the number of maximal elements of 
$A$. 

The question of whether some of the concepts of Smolin's cosmic natural selection scenario can be realised within a dynamical theory of causal sets is therefore transformed into concrete questions about the class of CSG models: are there partial breaks in causal sets grown in any CSG models and, if there are, what is the result of the renormalisation of the parameters given above? 
 
For the special case of posts, we already have some answers. 
For example, each member of the one parameter family of CSG models known as transitive percolation, for which $t_k = t^k$ where $t>0$ is a real number, almost surely produces a causet with infinitely many posts \cite{alon}. 
Brightwell claims there is a larger class of CSG models in which the same is true \cite{Brightwell:2009}. Transitive percolation models form a line of fixed points of the renormalisation transformation after a post. 
The ``RG flow'' induced by the 
existence of infinitely many posts and the question of whether and in what sense there is a basin of attraction around the line of transitive percolation was studied in \cite{Martin:2000js}. Sorkin has argued for the  potential relevance of this cosmic renormalisation to the ``large number'' puzzles of cosmology including the so-called ``flatness problem''. This is the observation that if our observed universe is evolved back in time using the Friedmann equation with standard 
 assumptions about the matter content of the universe, then when the Hubble parameter 
 $H= \frac{\dot{a}}{a}$ becomes of order one in natural units -- \textit{ i.e.} spacetime curvature becomes Planckian 
 -- the radius of curvature of 3 dimensional \textit{space}  is about 28 orders of magnitude bigger than one. 
Inflationary scenarios do not explain this because they themselves
 have to be fine tuned to produce this number. Sorkin's argument, in contrast, is that the universe could be \textit{self-tuning}  and that the underlying dynamics could be ``natural'' -- i.e. contain only numbers of order one. The suggestion is that the occurrence of a very large number of cycles of cosmic expansion and collapse, punctuated by posts, could result in an effective dynamics for our current era such that the cosmic expansion after the latest post would have resulted in a  large, almost flat, 3 dimensional space at the end of the Planck era  \cite{Sorkin:1998hi}. More work is needed to see if this proposal can be realised. 

For Smolin's Cosmic Natural Selection, a model of child universes created at the singularities of 
black holes requires partial breaks or partial posts that are \textit{not} posts. 
It is an open question if these can occur in any CSG model and, if it turns out they cannot, exploring the notion of black holes being the birthplace of new universes will need a different kind of causal set dynamics, 
perhaps a fully quantal dynamics. 

\section*{Acknowledgment}
This work is supported by STFC grant ST/L00044X/1. SZ is grateful for support from the Alan Howard Foundation and the Kenneth Lindsay Scholarship Trust. 


\bibliography{../bibliography/refs}

\end{document}